  \providecommand\BibTeX{{%
    \normalfont B\kern-0.5em{\scshape i\kern-0.25em b}\kern-0.8em\TeX}}}
\begin{document}

\title{Predicting Lemmas in Generalization of IC3}

\author{Yuheng Su}
\affiliation{
    \institution{University of Chinese Academy of Sciences}
    \institution{Institute of Software, Chinese Academy of Sciences}
    \country{China}
}
\email{gipsyh.icu@gmail.com}

\author{Qiusong Yang}
\authornote{Qiusong Yang is the corresponding author.}
\affiliation{
    \institution{Institute of Software, Chinese Academy of Sciences}
    \country{China}
}
\email{qiusong@iscas.ac.cn}

\author{Yiwei Ci}
\affiliation{
    \institution{Institute of Software, Chinese Academy of Sciences}
    \country{China}
}
\email{yiwei@iscas.ac.cn}

\begin{abstract}
The IC3 algorithm, also known as PDR, has made a significant impact in the field of safety model checking in recent years due to its high efficiency, scalability, and completeness. The most crucial component of IC3 is inductive generalization, which involves dropping variables one by one and is often the most time-consuming step. In this paper, we propose a novel approach to predict a possible minimal lemma before dropping variables by utilizing the counterexample to propagation (CTP). By leveraging this approach, we can avoid dropping variables if predict successfully. The comprehensive evaluation demonstrates a commendable success rate in lemma prediction and a significant performance improvement achieved by our proposed method.
\end{abstract}

\maketitle

\section{Introduction}
As hardware designs continue to scale and grow increasingly complex, the occurrence of errors in circuits becomes more prevalent, and the cost of such errors is substantial. To ensure the correctness of circuits, formal verification techniques such as symbolic model checking have been widely adopted by the industry. Symbolic model checking enables the proof of properties in hardware circuits or the identification of errors that violate these properties. IC3 \cite{IC3} (also known as PDR \cite{PDR}) is a highly influential SAT-based algorithm used in hardware model checking, serving as a primary engine for many state-of-the-art model checkers. In comparison to other symbolic algorithms, IC3 stands out for its completeness compared to BMC and scalability compared to IMC and BDD approaches.

To verify a safety property, IC3 endeavors to compute an inductive invariant. This invariant is derived from a series of frames denoted as $F_0, \cdots, F_k$ , where each $F_i$ represents an over-approximation of the set of states reachable in $i$ or fewer steps and does not include any unsafe states. Each frame is represented as a CNF formula, and each clause within the formula is also referred to as a \textit{lemma}. The process terminates when either $F_i = F_{i+1}$ for some $i$, indicating that an inductive invariant has been reached and the system being verified satisfies the designated safety property, or when a real counterexample is found. The basic algorithm for computing the frame $F_{i+1}$ from the previous $F_i$ first \textit{initializes} the frame $F_{i+1}$  to $\top$, which represents the set of all possible states. It then efficiently \textit{blocks} any unsafe states through inductive generalization and \textit{propagates} the lemmas learned in previous frames to $F_{i+1}$  as far as possible. 

Given an unsafe state represented as a boolean cube, the inductive generalization procedure aims to expand it as much as possible to include additional unreachable states. The standard algorithm \cite{IC3} adopts the "down" strategy by attempting to drop as many literals as possible. If a certain literal is dropped and the negation of the new cube is found to be inductive, this results in a smaller inductive clause. If unsuccessful, the algorithm then attempts the process again with a different literal. This process strengthens the frames and significantly reduces the number of iterations. However, it is also the most time-consuming part and has a significant impact on the performance of IC3. Numerous studies have been undertaken to improve generalizations. According to \cite{BetterGeneralization}, certain states that are unreachable but can be backtracked from unsafe states, which cannot be further generalized, are identified as counterexamples to generalization (CTG). The negation of a CTG proves to be inductive, leading to a reduction in the depth of the explicit backward search performed by IC3 and enabling stronger inductive generalizations. In  \cite{CAV23}, the authors first drop literals that have not appeared in any subsumed lemmas of the previous frame in order to increase the probability of producing lemmas that can be effectively propagated to the next frame. In \cite{PushingToTheTop}, the learned lemmas are labeled as \textit{bad} lemmas if they exclude certain reachable states. Later on, these lemmas are not pushed further and their usage in generalization is minimized.
\begin{figure}[h]
    \centering
    \includegraphics[width=0.45\textwidth]{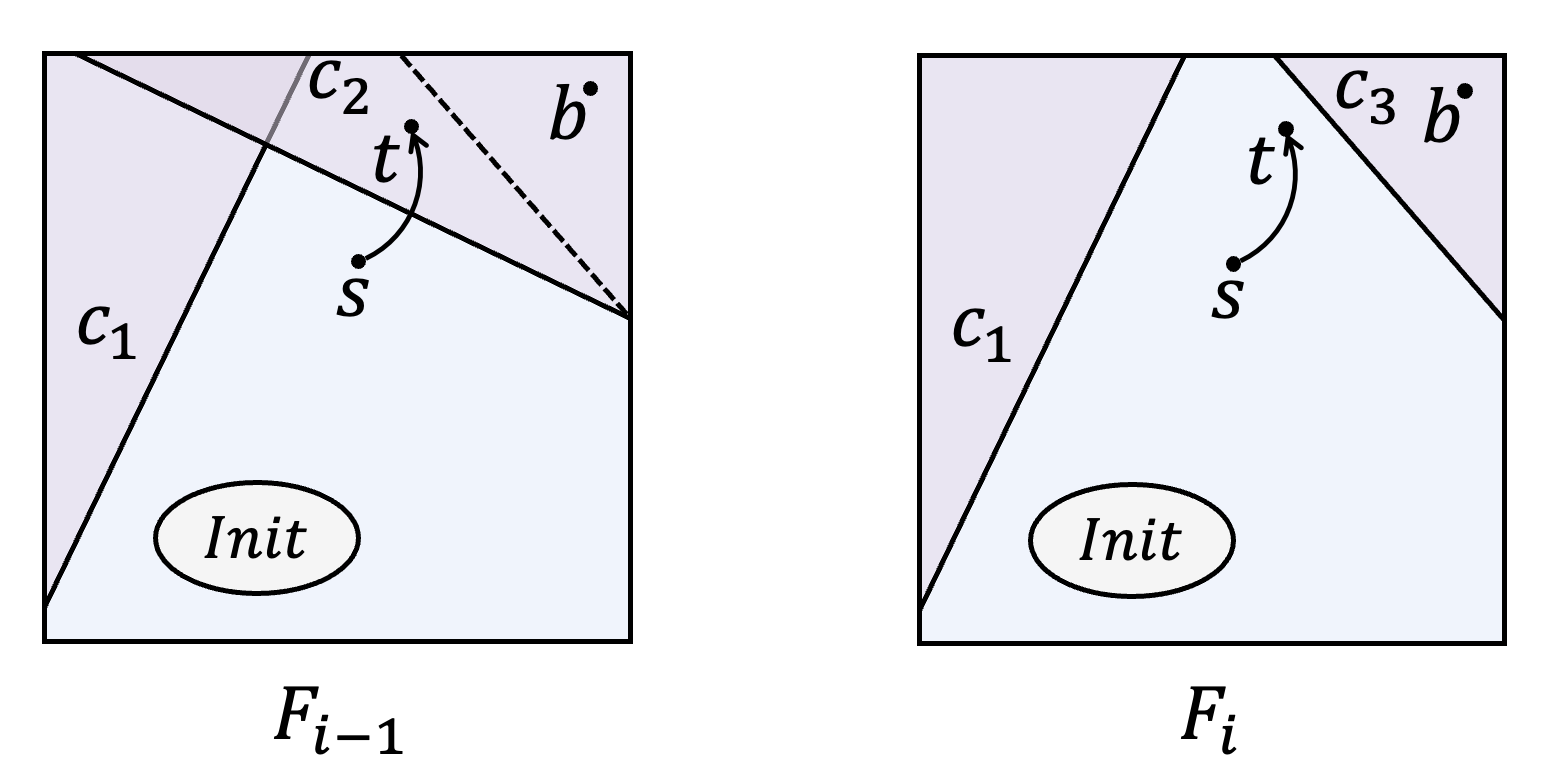}
    \caption{$F_{i-1}$ and $F_i$ frames: the square represents the entire state space, the blue region represents the state space represented by the frame, and each triangle in the purple region represents states blocked by a cube.}
    \label{fig:Fi1Fi}
\end{figure}

In this paper, we aim to address the inefficiency of IC3 generalization in the scenario depicted in Figure \ref{fig:Fi1Fi}. Let's assume there is only one lemma $\neg c_1$ (where $c_1$ represents a cube) in the frame $F_{i-1}$ at first. Later, an unsafe state $b$ is reached, and the generalization of $b$ yields the lemma $\neg c_2$. Furthermore, as the lemmas are further propagated to $F_{i}$, the lemma $\neg c_1$ succeeds while $\neg c_2$ fails. This failure indicates that the cube $c_2$ cannot be blocked in $F_{i}$. In previous studies\cite{IC3,BetterGeneralization,PushingToTheTop,CAV23}, the standard practice was to discard the lemma $\neg c_2$ and then rediscover $b$ in an attempt to block and generalize it once again. If we could predict a potential lemma candidate before dropping variable in generalization based on the information learned so far and verify its validity, we might be able to avoid the costly process of dropping variables one by one. However, the challenge lies in developing a prediction mechanism with a high success rate, as an ineffective prediction would result in additional overhead.
	
The work presented in this paper utilizes counterexample to propagation (CTP) to predict potential lemma candidates and thereby significantly reduce the number of costly generalization operations. As shown in Figure \ref{fig:Fi1Fi}, when the lemma $\neg c_2$ cannot be propagated to $F_{i}$, it indicates the existence of two states, namely $s$ and $t$. Here, $t$ is a successor state of $s$, where $s \models F_{i-1}$ and $t \models c_2$.  We will propose a method for predicting candidates for the lemma $\neg c_3$, which blocks the unsafe state $b$ and is inductive relative to $F_{i-1}$, based on $c_2$, $t$, and $b$. This marks the first implementation of a lemma prediction mechanism in the IC3 algorithm. We conducted a comprehensive performance evaluation of the proposed method, and the results demonstrate that our approach achieves a commendable success rate in predicting lemmas, thereby significantly improving the performance of IC3.

\textit{Organization:} Section \ref{Sec:Preliminaries} provides some background concepts. In Section \ref{Sec:PredictingLemmas}, the algorithm predicting  lemma candidates is presented. The proposed algorithm is comprehensively evaluated in Section \ref{Sec:Evaluation}. Section \ref{Sec:RelatedWork} discusses related works. Section \ref{Sec:Conclusion} concludes this paper and present directions for future work.

\section{Preliminaries}
\label{Sec:Preliminaries}
\subsection{Basics and Notations}
Our context is standard propositional logic. We use notations such as $x, y$ for Boolean variables, and $X, Y$ for sets of Boolean variables. The terms $x$ and $\lnot x$ are referred to as \emph{literals}. \emph{Cube} is conjunction of literals, while \emph{clause} is disjunction of literals. The negation of a cube becomes a clause, and vice versa. A Boolean formula in Conjunctive Normal Form (CNF) is a conjunction of clauses. It is often convenient to treat a clause or a cube as a set of literals, and a CNF as a set of clauses. For instance, given a CNF formula $F$, a clause $c$, and a literal $l$, we write $l \in c$ to indicate that $l$ occurs in $c$, and $c\in F$ to indicate that $c$ occurs in $F$. A formula $F$ implies another formula $G$, if every satisfying assignment of $F$ satisfies $G$, denoted as $F \Rightarrow G$. This relation can also be expressed as $F \models G$.

A Boolean transition system, denoted as $S$, can be defined as a tuple $\langle X, Y, I, T\rangle$. Here, $X$ and $X'$ represent the sets of state variables in the current state and the next state respectively, while $Y$ represents the set of input variables. The Boolean formula $I(X)$ represents the initial states, and $T(X, Y, X')$ describes the transition relation of the system.  State $s_2$ is a successor of state $s_1$ iff $(s_1, s_2')$ is an assignment of $T$ ($(s_1, s_2') \models T$). A $path$ of length $n$ in $S$ is a finite sequence of states $s_0, s_1, \ldots, s_n$ such that $s_0$ is an initial state ($s_0 \models I$) and for each $i$ where $0 \leq i < n$, state $s_{i+1}$ is a successor of state $s_i$ ($(s_i, s_{i+1}') \models T$). Checking a safety property of a transition system $S$ is reducible to checking an invariant property \cite{10.5555/211468}. An invariant property $P(X)$ is a Boolean formula over $X$. A system $S$ satisfies the invariant property $P$ iff all reachable states of $S$ satisfy the property $P$. Otherwise, there exists a counterexample \emph{path} $s_0, s_1, \ldots, s_k$ such that $s_k \not\models P$.

\subsection{Overview of IC3}
IC3 is a SAT-based safety model checking algorithm, which only needs to unroll the system at most once \cite{IC3}. It tries to prove that $S$ satisfies $P$ by finding an inductive invariant $INV(X)$ such that:
\begin{itemize}
\item $I(X) \Rightarrow INV(X)$
\item $INV(X) \land T(X,Y,X') \Rightarrow INV(X')$
\item $INV(X) \Rightarrow P(X)$
\end{itemize}

To achieve this objective, it maintains a monotone CNF sequence $F_0, F_1 \ldots F_k$. Each $F_i$ in the sequence is called a \emph{frame}, which represents an over-approximation of the states reachable in $S$ within $i$ transition steps. Each clause $c$ in $F_i$ is called \emph{lemma} and the index of a frame is called \emph{level}. IC3 maintains the following invariant: 

\begin{itemize}
\item $F_0 = I$
\item for $i > 0, F_i$ is a set of clauses
\item $F_{i+1} \subseteq F_i$ ($F_i, F_{i+1}$ are sets of clauses)
\item $F_i \Rightarrow F_{i+1}$
\item $F_i \land T \Rightarrow F_{i+1}$
\item for $i < k, F_i \Rightarrow P$
\end{itemize}

A lemma $c$ is said to be \emph{inductive relative} to $F_i$ if, starting from the intersection of $F_i$ and $c$, all states reached in a single transition are located inside $c$. This condition can be expressed as a SAT query $sat(F_i \land c \land T \land \lnot c')$. If this query is satisfied, it indicates that $c$ is not inductive relative to $F_i$ because we can find a counterexample to induction (CTI) that starting from $F_i \land c$ and transitioning outside of the lemma $c$. If lemma $c$ is inductive relative to $F_i$, it can be also said that cube $\lnot c$ is blocked in $F_{i+1}$.

Algorithm \ref{alg:ic3} provides an overview of the IC3 algorithm. This algorithm incrementally constructs frames by iteratively performing two phases: the blocking phase and the propagation phase. During the blocking phase, the IC3 algorithm focuses on making $F_k \Rightarrow P$. It iteratively get a cube $c$ such that $c \models \lnot P$, and block it recursively. This process involves attempting to block the cube's predecessors if it cannot be blocked directly. It continues until the initial states cannot be blocked, indicating that $\lnot P$ can be reached from the initial states in $k$ transitions thus violating the property. In cases where a cube can be confirmed as blocked, IC3 proceeds to enlarge the set of blocked states through a process called generalization. This involves dropping variables and ensuring that the resulting clause remains relative inductive, with the objective of obtaining a minimal inductive clause. Subsequently, IC3 attempts to push the generalized lemma to the top and incorporates it into the frames. Throughout this phase, the sequence is augmented with additional lemmas, which serve to strengthen the approximation of the reachable state space. The propagation phase tries to push lemmas to the top. If a lemma $c$ in $F_i \setminus F_{i+1}$ is also inductive relative to $F_i$, then push it into $F_{i+1}$. During this process, if two consecutive frames become identical ($F_i = F_{i+1}$), then the inductive invariant is found and the safety of this model can be proofed.

\begin{algorithm}
\caption{Overview of IC3}
\label{alg:ic3}
\begin{algorithmic}[1]
\Function{$inductive\_relative$}{clause $c$, level $i$}
    \State \Return $\lnot sat(F_i\land c \land T \land \lnot c')$
\EndFunction
\\
\Function{$generalize$}{cube $b$, level $i$}
    \For{each $l \in b$}
        \State $cand \coloneqq b \setminus \{l\}$ \Comment{drop variable}
        \If{$I \Rightarrow \lnot cand$ and $inductive\_relative(\lnot cand, i-1)$}
            \State $b \coloneqq cand$
        \EndIf
    \EndFor
    \State \Return $b$
\EndFunction
\\
\Function{$block$}{cube $c$, level $i$, level $k$}
    \If{$i = 0$}
        \State \Return $false$
    \EndIf
    \While{$\lnot inductive\_relative(\lnot c, i-1)$}
        \State $p \coloneqq get\_predecessor(i-1)$
        \If{$\lnot block(p, i-1, k)$}
            \State \Return $false$
        \EndIf
    \EndWhile
    \State $mic \coloneqq \lnot generalize(c, i)$
    \While{$i<k$ and $inductive\_relative(mic, i)$}
        \State $i \coloneqq i+1$ \Comment{push lemma}
    \EndWhile
    \For{$1\leq j \leq i$}
        \State $F_j \coloneqq F_j \cup \{mic\}$
    \EndFor
    \State \Return $true$
\EndFunction
\\
\Function{$propagate$}{level $k$}
    \For{$1 \leq i < k$}
        \For{each $c \in F_i \setminus F_{i+1}$} 
            \If{$inductive\_relative(c, i)$} \Comment{push lemma}
                \State $F_{i+1} \coloneqq F_{i+1} \cup \{c\}$
            \EndIf
        \EndFor
        \If{$F_i = F_{i+1}$}
            \State \Return $true$
        \EndIf
    \EndFor
    \State \Return $false$
\EndFunction
\\
\Procedure{$ic3$}{$I,T,P$}
    \If{$sat(I\land\lnot P)$}
        \State \Return $unsafe$
    \EndIf
    \State $F_0 \coloneqq I,k \coloneqq 1,F_k \coloneqq \top$
    \While{$true$}
        \While{$sat(F_k\land \lnot P)$} \Comment{blocking phase}
            \State $c \coloneqq get\_predecessor(k)$
            \If{$\lnot block(c, k, k)$}
                \State \Return $unsafe$
            \EndIf
        \EndWhile
        \State $k \coloneqq k + 1,F_k \coloneqq \top$ \Comment{propagation phase}
        \If{$propagate(k)$}
            \State \Return $safe$
        \EndIf
    \EndWhile
\EndProcedure
\end{algorithmic}
\end{algorithm}

\section{Predicting Lemmas}
\label{Sec:PredictingLemmas}
\subsection{Intuition}
The generalization of the IC3 algorithm attempts to iteratively drop variables to obtain a minimal inductive clause. Each variable dropping corresponds to a SAT query, which can incur significant overhead. If there existed a method to speculatively predict a high-quality (minimal) inductive clause and demonstrate its relative inductiveness, it could improve the performance of generalization by avoiding dropping variables one by one.

In Figure \ref{fig:Fi1Fi}, when attempting to push the lemma $\lnot c_1$ and $\lnot c_2$ from $F_{i-1}$ to $F_i$ (where $c_1$ and $c_2$ are cubes), $c_1$ succeeds, but $c_2$ fails. This failure indicates that the cube $c_2$ cannot be blocked in $F_i$. However, it is highly likely that cube $c_2$ is generalized from state $b$, where $b$ is a bad state and $b \models c_2$. Since cube $c_2$ cannot be blocked in $F_i$, it is possible that $b$ still exists in $F_i$ ($b \models F_i$). IC3 will later find $b$ and attempt to block and generalize it again.

Additionally, in case of a push failure, we encounter two states in counterexample to propagation (CTP): $s$ and $t$, where $t$ is a successor state of $s$, $s \models F_{i-1}$, and $t \models c_2$. The existence of CTP prevents $c_2$ from being blocked in $F_i$. If we can remove state $t$ from $c_2$, CTP will no longer be effective to the remaining part, thus the remaining part may potentially be blocked in $F_i$. In this paper, we attempt to refine $c_2$ by utilizing state $t$ to derive a new cube $c_3$, where $t \not\models c_3$, $b \models c_3$ and $c_3$ can potentially be blocked in $F_i$, thus a possible lemma in $F_i$ is predicted.


\subsection{How to Refine Cubes}
How can we refine the cube $c_2$ using $t$ to obtain a cube $c_3$ such that $t \not\models c_3$ and $b \models c_3$? To achieve this objective, we introduce the concept of the \emph{diff set} of two cubes:

\begin{definition}[Diff Set]
\label{DefinitionDiffSet}
Let $a$ and $b$ are cubes, $diff(a, b) = \{l \mid l\in a \land \lnot l \in b$\}.
\end{definition}

Diff set $diff(a, b)$ represents the set of literals in $a$ such that their negations are in $b$. It should be noted that $diff(a, b) \neq diff(b, a)$. The diff set has the following theorems:

\begin{theorem}
\label{TheoremDiffSetEmpty}
Let $a$ and $b$ are cubes, $a \neq \bot$ and $b \neq \bot$. $a \land b = \bot$, iff $diff(a, b) \neq \emptyset$.
\end{theorem}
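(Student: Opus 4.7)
The plan is to prove both directions, with the backward direction being nearly immediate and the forward direction proved by contrapositive through an explicit satisfying assignment.

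For the backward direction ($\Leftarrow$), I would simply unpack the definition: if $diff(a,b) \neq \emptyset$, then some literal $l$ satisfies $l \in a$ and $\lnot l \in b$. Since $a$ and $b$ are treated as sets of literals conjoined together, the formula $a \land b$ contains both $l$ and $\lnot l$ as conjuncts, so every assignment falsifies one of them, and $a \land b = \bot$.

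For the forward direction ($\Rightarrow$), I would argue by contrapositive: assume $diff(a,b) = \emptyset$ and show $a \land b \neq \bot$ by constructing a satisfying assignment. Define $\sigma$ on variables occurring in $a \cup b$ by setting each literal in $a \cup b$ to true (and extending arbitrarily to other variables). I must check that $\sigma$ is well-defined, i.e., no variable is forced to take two different values. Three cases arise: (i) a variable appears only in $a$ — consistent because $a \neq \bot$ means $a$ contains no literal together with its negation, so $\sigma$ is unambiguous on these; (ii) analogously for variables only in $b$ using $b \neq \bot$; (iii) a variable appears in both — here I would use $diff(a,b) = \emptyset$ together with the symmetric observation (a literal $l \in b$ with $\lnot l \in a$ would also yield an element of $diff(a,b)$, namely $\lnot l$) to conclude that any shared variable appears with the same polarity in both cubes. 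Hence $\sigma$ satisfies every literal in $a$ and every literal in $b$, so $\sigma \models a \land b$, giving $a \land b \neq \bot$.

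The main obstacle, small though it is, is case (iii) above: the definition of $diff(a,b)$ is asymmetric, so one has to notice that $diff(a,b)=\emptyset$ actually implies the symmetric statement $diff(b,a)=\emptyset$ as well (because $l \in b$ with $\lnot l \in a$ means $\lnot l \in a$ with $\lnot(\lnot l) = l \in b$, hence $\lnot l \in diff(a,b)$). Once this symmetry is established, consistency of the assignment on shared variables follows immediately and the rest of the argument is routine. I would keep the proof short, flag the symmetry observation explicitly, and write the whole thing in three or four sentences.
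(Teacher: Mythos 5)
Your proof is correct. The backward direction is identical to the paper's: a literal $l \in diff(a,b)$ puts both $l$ and $\lnot l$ among the conjuncts of $a \land b$, forcing $\bot$. For the forward direction the paper argues directly: if $a \land b = \bot$ then the combined cube $a \cup b$ must contain some complementary pair $l, \lnot l$, and since neither $a$ nor $b$ alone may contain such a pair (as $a \neq \bot$ and $b \neq \bot$), the pair is split across the two cubes, yielding an element of $diff(a,b)$. You instead take the contrapositive and exhibit an explicit satisfying assignment when $diff(a,b) = \emptyset$. The two routes rest on the same core fact --- a cube is unsatisfiable iff it contains a complementary literal pair --- but the paper assumes the ``only if'' half of that fact, whereas your assignment construction actually proves it, so your argument is more self-contained at the cost of a few extra lines. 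Your explicit handling of the asymmetry of $diff$ (that $diff(a,b)=\emptyset$ forces $diff(b,a)=\emptyset$, since $l \in b$ with $\lnot l \in a$ would put $\lnot l$ into $diff(a,b)$) is a genuine point; the paper's direct argument also needs it implicitly, since the complementary pair could a priori sit with either orientation, and the paper glosses over this with its phrasing ``a literal in $a$ and its negation in $b$.'' Both proofs are sound; yours is the more careful of the two.
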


\begin{proof}
Let cube $c = a \land b$ ($c = a \cup b$). If $c = \bot$, it implies the presence of a literal $l$ in $c$ along with its negation, $\lnot l$. Since $a \neq \bot$ and $b \neq \bot$, we can conclude that there exists a literal in $a$ and its negation in $b$, indicating that $diff(a, b) \neq \emptyset$. Conversely, if $diff(a, b) \neq \emptyset$, we can identify a literal $l$ in $diff(a, b)$, where both $l$ and its negation exist in $c$, resulting in $a \land b = \bot$.
\end{proof}

\begin{theorem}
\label{TheoremTernaryDiffSet}
Let $a$, $b$, $c$ are cubes, if $diff(a, b) \neq \emptyset$ and $c \cap diff(a, b) \neq \emptyset$, then $diff(c, b) \neq \emptyset$.
\end{theorem}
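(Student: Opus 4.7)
The plan is to prove this by a direct element-chase, unwinding the set-theoretic definition of $diff$ from Definition~\ref{DefinitionDiffSet}. The hypothesis $c \cap diff(a,b) \neq \emptyset$ already gives us a witness literal, so the argument is essentially a one-step relabeling.

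First, I would pick an arbitrary literal $l \in c \cap diff(a, b)$, which exists by assumption. From $l \in c$ I immediately get membership of $l$ in the cube $c$, and from $l \in diff(a, b)$ together with the definition I get that $\lnot l \in b$. Then I would combine these two facts: since $l \in c$ and $\lnot l \in b$, the literal $l$ satisfies exactly the condition required to belong to $diff(c, b)$, so $l \in diff(c, b)$ and therefore $diff(c, b) \neq \emptyset$.

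Notice that the first hypothesis $diff(a, b) \neq \emptyset$ is actually redundant, since it is implied by $c \cap diff(a, b) \neq \emptyset$; it is presumably stated only to mirror the shape in which the theorem will be applied later in the paper. I would mention this in passing but not lean on it. The only potentially subtle point is remembering that $diff$ is asymmetric, so one must be careful to verify the ordered condition ``literal in the first argument, its negation in the second'' rather than the reverse; however, since $l$ came from $diff(a,b)$ with the same second argument $b$, the asymmetry works in our favor and no additional case analysis is needed.

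There is no real obstacle here: the statement is essentially a monotonicity property of $diff$ in its first argument, and the proof is a two-line unfolding. The main thing to get right is to be explicit about which side of the definition of $diff$ each membership is used on, so that the reader sees that the shared literal $l$ transfers cleanly from $diff(a, b)$ to $diff(c, b)$.
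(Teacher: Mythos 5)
Your proof is correct and follows essentially the same element-chase as the paper: pick $l \in c \cap diff(a,b)$, derive $l \in c$ and $\lnot l \in b$ from the definition, and conclude $l \in diff(c,b)$. Your observation that the hypothesis $diff(a,b) \neq \emptyset$ is redundant is also accurate.
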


\begin{proof}
Let $l$ be a literal, and suppose $l \in c \cap diff(a, b)$. It can be derived that $l \in c$. According to Definition \ref{DefinitionDiffSet}, we know that $l \in a$ and $\lnot l \in b$. Since $l \in c$ and $\lnot l \in b$, it follows that $l \in diff(c, b)$, leading to the conclusion that $diff(c, b) \neq \emptyset$.
\end{proof}

Additionally, cube has the following theorem:

\begin{theorem}
\label{TheoremCube}
Let $a$ and $b$ are cubes, $a \neq \bot$ and $b \neq \bot$. $a \Rightarrow b$ (or $a \models b$) iff $b \subseteq a$.
\end{theorem}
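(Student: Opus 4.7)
The plan is to prove the two directions of the biconditional separately, treating cubes as sets of literals as the paper does.

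For the easy direction ($b \subseteq a \Rightarrow a \models b$), I would argue directly from the semantics of conjunctions of literals: any assignment $\sigma$ with $\sigma \models a$ must satisfy every literal in $a$, and since every literal of $b$ is also a literal of $a$ by hypothesis, $\sigma$ satisfies every literal of $b$, hence $\sigma \models b$.

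For the nontrivial direction I would prove the contrapositive: assume $b \not\subseteq a$ and exhibit an assignment that satisfies $a$ but not $b$, contradicting $a \models b$. Pick some literal $l \in b$ with $l \notin a$. Because $a \neq \bot$, $a$ is consistent and admits a satisfying assignment $\sigma$. Split into two cases: if $\lnot l \in a$, then $\sigma$ already forces $l$ to be false, so $\sigma \not\models b$ (since $l \in b$ demands $l$ be true); otherwise neither $l$ nor $\lnot l$ occurs in $a$, so $a$ places no constraint on the variable of $l$, and I can flip (or define) $\sigma$ on that variable to falsify $l$ while preserving $\sigma \models a$. Either way I obtain an assignment witnessing $a \not\models b$.

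The main obstacle, such as it is, lies in the case split on the forward direction: one must be careful that ``$l \notin a$'' in the set-of-literals view permits either $\lnot l \in a$ or the variable of $l$ being absent from $a$ altogether, and the construction of the falsifying assignment differs in the two cases. The consistency of $a$ (guaranteed by $a \neq \bot$, analogous to Theorem~\ref{TheoremDiffSetEmpty}) is what makes the second case go through cleanly. Everything else is a direct application of the definitions.
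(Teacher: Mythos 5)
Your proof is correct. Note that the paper states Theorem~\ref{TheoremCube} without any proof at all, so there is nothing to compare against; your argument is the standard one and it is complete. In particular you handle the one genuinely delicate point correctly: in the contrapositive direction, ``$l \in b$ but $l \notin a$'' splits into the case $\lnot l \in a$ (where any satisfying assignment of $a$, which exists because $a \neq \bot$, already falsifies $l$) and the case where the variable of $l$ does not occur in $a$ at all (where you are free to set that variable so as to falsify $l$ without disturbing $\sigma \models a$). Both cases rely on the consistency of $a$ guaranteed by $a \neq \bot$, exactly as you observe, and the easy direction is immediate from reading cubes as sets of literals.
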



Actually in IC3, $t$ and $b$ both are cubes, that they may not only represent one state but a set of states. It is possible for cubes $t$ and $b$ to have intersections. However, we will initially consider the case where there is no intersection, i.e., $t \land b = \bot$. According to Theorem \ref{TheoremDiffSetEmpty}, we can derive the following equation:
\begin{equation}\label{DiffBT}
diff(b, t) \neq \emptyset
\end{equation}

Our objective is to find a cube $c_3$ that satisfies $c_3 \land t = \bot$, $b \models c_3$, and $c_3 \models c_2$. Referring to Theorem \ref{TheoremDiffSetEmpty} and Theorem \ref{TheoremCube}, we can achieve our objective if $c_3$ satisfies the following equations:
\begin{equation}\label{DiffC3T}
diff(c_3, t) \neq \emptyset
\end{equation}
\begin{equation}\label{C3b}
c_3 \subseteq b
\end{equation}
\begin{equation}\label{C2C3}
c_2 \subseteq c_3
\end{equation}
 
Building upon Theorem \ref{TheoremTernaryDiffSet} and Equation \ref{DiffBT}, we can establish the truth of Equation \ref{DiffC3T} by ensuring the satisfaction of the following equation:
\begin{equation}\label{C3DiffBT}
c_3 \cap diff(b, t) \neq \emptyset
\end{equation}

By constructing $c_3$ according to the following equation, we can satisfy Equation \ref{C3DiffBT}:
\begin{equation}\label{MakeC3}
c_3 = c_2 \cup \{l\}\ (l \in diff(b, t))
\end{equation}

The method we use for predicting lemmas is represented by Equation \ref{MakeC3}. This equation demonstrates the process of attempting to select an element from the diff set and incorporate it into $c2$, resulting in the creation of a new cube $c3$ that is larger than $c2$ by one element. If Equation \ref{MakeC3} is satisfied, it is evident that Equation \ref{C2C3} is true. Additionally, based on Definition \ref{DefinitionDiffSet}, Equation \ref{C3b} is also satisfied. As a result, we have successfully obtained the desired cube.

\subsection{The Algorithm}

\definecolor{deepblue}{RGB}{0,0,148}
\begin{algorithm}
\caption{Predicting Lemmas}
\label{alg:PredictingIc3}
\begin{algorithmic}[1]
\Function{$parent\_lemmas$}{clause $c$, level $i$}
    \State $parents \coloneqq \emptyset$
    \If{$i \neq 0$}
        \For{each $p \in F_i \setminus F_{i+1}$}
            \If{$p \subseteq c$}
                \State $parents \coloneqq parents \cup \{p\}$
            \EndIf
        \EndFor
    \EndIf
    \State \Return $parents$
\EndFunction
\\
\Function{$generalize$}{cube $b$, level $i$}
\textcolor{deepblue} {
    \State $parents \coloneqq parent\_lemmas(\lnot b, i-1)$ \label{BeginPredict}
    \For{each $p \in parents$}
        \If{$\lnot failure\_push.has(p, i-1)$} \Comment{find state $t$} \label{BeginFindFailed}
            \State \textbf{continue} \label{EndFindFailed}
        \EndIf
        \State $t \coloneqq failure\_push[(p, i-1)]$
        \State $ds \coloneqq diff(b, t)$ \Comment{get diff set}
        \If{$ds = \emptyset$} \Comment{push parent lemma} \label{BeginDiffEmpty}
            \If{$inductive\_relative(p, i-1)$}
                \State \Return $p$
            \Else
                \State $failure\_push[(p, i-1)] \coloneqq get\_model(i-1)$
            \EndIf \label{EndDiffEmpty}
        \Else
            \For{each $d \in ds$} \label{BeginCand}
                \State $cand \coloneqq p \cup \{\lnot d\}$
                \If{$inductive\_relative(cand, i-1)$}
                    \State \Return $\lnot cand$
                \Else
                    \State $ds \coloneqq ds \cap diff(b, get\_model(i-1))$
                \EndIf \label{EndCand}
            \EndFor \label{EndPredict}
        \EndIf
    \EndFor
}
    \For{each $l \in b$}
        \State $cand \coloneqq b \setminus \{l\}$ \Comment{drop variable}
        \If{$I \Rightarrow \lnot cand$ and $inductive\_relative(\lnot cand, i-1)$}
            \State $b \coloneqq cand$
        \EndIf
    \EndFor
    \State \Return $b$
\EndFunction
\\
\Function{$block$}{cube $c$, level $i$, level $k$}
    \State \ldots
    \While{$i<k$ and $inductive\_relative(mic, i)$} \label{BlockPushLemma}
        \State $i \coloneqq i+1$ \Comment{push lemma}
    \EndWhile
        \State \textcolor{deepblue}{$failure\_push[(mic, i)] \coloneqq get\_model(i)$} \Comment{\textcolor{deepblue}{record state $t$}}
    \For{$1\leq j \leq i$}
        \State $F_j \coloneqq F_j \cup \{mic\}$
    \EndFor
    \State \Return $true$
\EndFunction
\\
\Function{$propagate$}{level $k$}
    \State \textcolor{deepblue}{$failure\_push.clear()$} \Comment{\textcolor{deepblue}{reconstruct hash table}}
    \For{$1 \leq i < k$}
        \For{each $c \in F_i \setminus F_{i+1}$}
            \If{$inductive\_relative(c, i)$} \Comment{push lemma} \label{PropagatePushLemma}
                \State $F_{i+1} \coloneqq F_{i+1} \cup \{c\}$
\textcolor{deepblue} {
            \Else \Comment{record state $t$}
                \State $failure\_push[(c, i)] \coloneqq get\_model(i)$
}
            \EndIf
        \EndFor
        \If{$F_i = F_{i+1}$}
            \State \Return $true$
        \EndIf
    \EndFor
    \State \Return $false$
\EndFunction
\end{algorithmic}
\end{algorithm}

Algorithm \ref{alg:PredictingIc3} introduces our proposed approach for lemma prediction by leveraging the CTP in IC3. The modifications to the original IC3 algorithm are denoted by the blue-colored code. In line \ref{BlockPushLemma} and line \ref{PropagatePushLemma}, the algorithm attempts to push lemmas. If push fails, it retrieves the corresponding state $t$ from the SAT solver and stores it in a hash table. The hash table uses a tuple consisting of the failed lemma and the current level as the key. Additionally, at regular intervals before each propagation step, the hash table is cleared and reconstructed.

In the generalization, the algorithm first attempts to predict a possible minimal lemma and checks its validity, as demonstrated in lines \ref{BeginPredict} to \ref{EndPredict}. Initially, it identifies all the parent lemmas of clause $\lnot b$ in level $i$. A \emph{parent lemma} of clause $c$ at level $i$ is the lemma $p$ only in the frame of level $i-1$ (not in the current level) that $p \Rightarrow c$. For example, it can be observed that $\lnot c_2$ is the parent lemma of clause $\lnot b$ at level $i$ in Figure \ref{fig:Fi1Fi}. We make an attempt to predict a minimal lemma in $F_i$ based on each identified parent lemma:
\begin{enumerate}[leftmargin=*]
\item If the parent lemma $p$ does not have any failed push attempts at level $i-1$, we cannot find CTP, and thus cannot predict a lemma and proceed with the next parent lemma as shown in lines \ref{BeginFindFailed} - \ref{EndFindFailed}.
\item Upon successfully retrieving state $t$ from the hash table, we proceed to calculate the diff set $ds$ of the cube $b$ and cube $t$. In lines \ref{BeginDiffEmpty} - \ref{EndDiffEmpty}, if the diff set is empty, it indicates an intersection between the cubes $b$ and $t$ according to Theorem \ref{TheoremDiffSetEmpty}, meaning that some states in cube $t$ are already blocked due to the blocking of $b$ in $F_i$. In this case, we consider pushing the parent lemma $p$ to $F_i$ as a predicted lemma. If successful, the predicted lemma is deemed valid, we take it as the final result. Otherwise, we store the state $t$ associated with this failure in the hash table.
\item If the diff set is not empty, we attempt to predict a possible lemma using Equation \ref{MakeC3} by iteratively selecting a literal from the diff set as shown in lines \ref{BeginCand} - \ref{EndCand}. If successful, since it is only one variable longer than the parent lemma, we consider it to be a high-quality lemma. Therefore, we no longer generalize it by dropping variables and directly treat it as the final result of generalization. If unsuccessful, it is highly likely that the counterexample is also an another CTP of pushing $p$ to $F_i$, thus we recalculate the diff set and eliminate certain infeasible candidates from it.
\end{enumerate}

\section{Evaluation}
\label{Sec:Evaluation}
\subsection{Experimental Setup}
\emph{IC3ref} \cite{IC3ref} is an IC3 implementation provided by the inventor of this algorithm, which is competitive and commonly used as the baseline \cite{FBPDR, IC3Progress, CAV23}. We also implemented IC3 algorithm in Rust, a modern programming language designed to offer both performance and safety, denoted as \emph{RIC3}, which is competitive to IC3ref. We integrated our proposed predicting lemmas optimization into both IC3ref and RIC3, denoted as \emph{IC3ref-pl} and \emph{RIC3-pl} respectively. Since our method is related to \cite{CAV23}, which is also implemented in IC3ref, available at \cite{CAV23Repo}, we also conducted an evaluation of it, denoted as \emph{IC3ref-CAV23}. We also consider the PDR implementation in the ABC \cite{ABC}, which is a competitive model checker.

We conducted all experiments using the complete benchmark set of HWMCC'15 (excluding restricted-access cases) and HWMCC'17, totaling 730 cases, under consistent resource constraints (8 GB, 1000s). The experiments were performed on an AMD EPYC 7532 machine with 32 cores running at 2.4 GHz. To ensure reproducibility, we have provided our experiment implementations \cite{Artifact}.

\subsection{Experimental Results}
The summary of results, comparisons among different configurations, and scatter plots of implementations with and without lemma prediction are presented in Table \ref{tab:OverallResult}, Figure \ref{fig:PlotResult}, and Figure \ref{fig:Scatters}, respectively. It can be observed that by incorporating our proposed optimization, different implementations solve more cases compared to the original implementation within various time limits and a greater number of cases have been solved faster. It can also be observed that our proposed method solves more cases than IC3ref-CAV23 and ABC-PDR. This demonstrates the effectiveness of our proposed method in enhancing the performance of IC3.

\begin{table}
\caption{Summary of Results}
\label{tab:OverallResult}
\begin{tabular}{c c c c c}
\hline
Configuration & Solved & Safe & Unsafe\\
\hline
RIC3 & 365 & 264 & 101 \\ 
RIC3-pl & 375 & 273 & 102 \\
IC3ref & 371 & 263 & 108 \\
IC3ref-pl & 379 & 268 & 111 \\
IC3ref-CAV23 & 375 & 269 & 106 \\
ABC-PDR & 373 & 267 & 106 \\
\hline
\end{tabular}
\end{table}

\begin{figure}
    \centering
    \includegraphics[width=245pt]{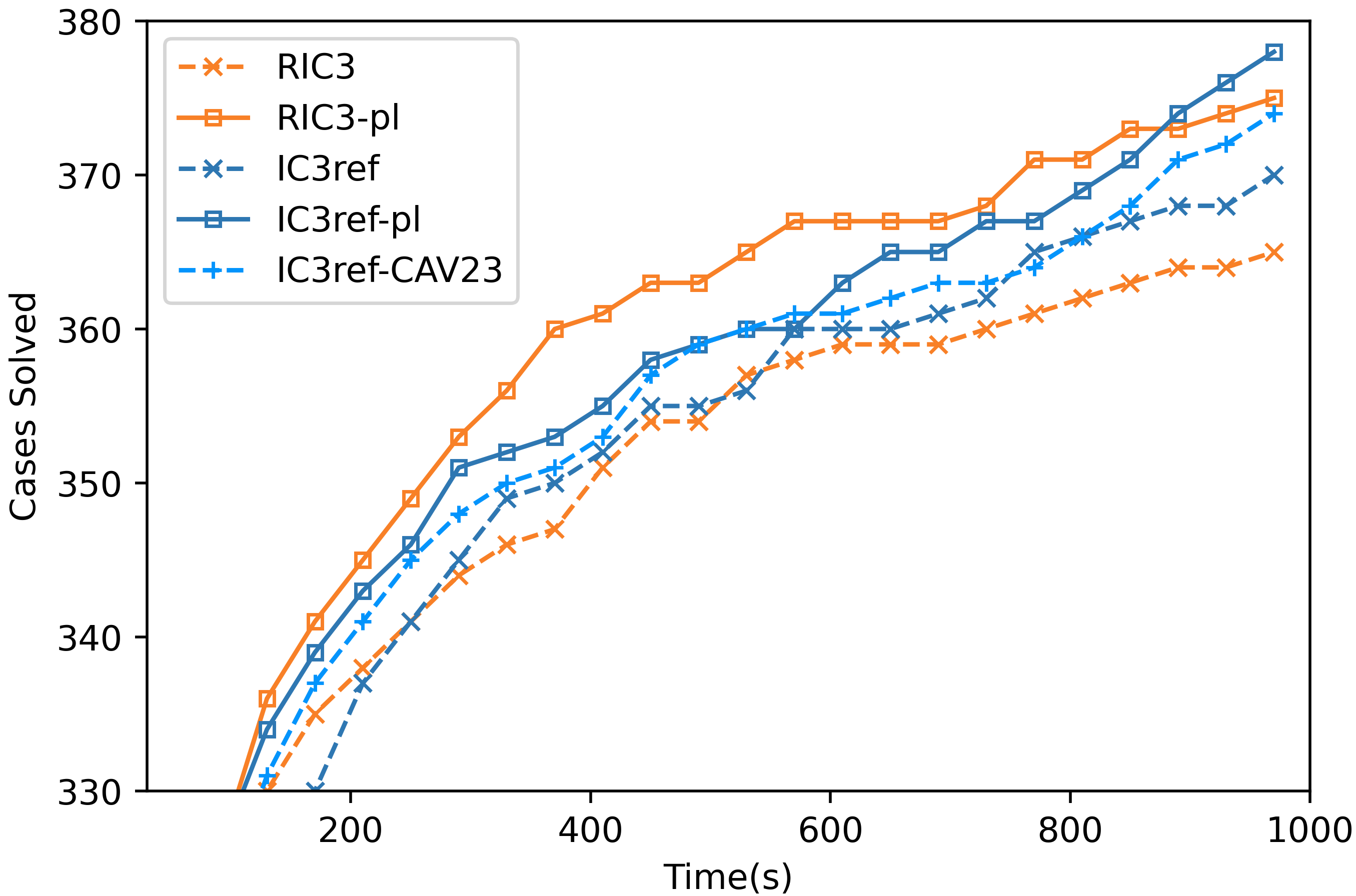}
    \caption{Comparisons among the different configurations.}
    \label{fig:PlotResult}
\end{figure}

\begin{figure}
    \centering
    \includegraphics[width=245pt]{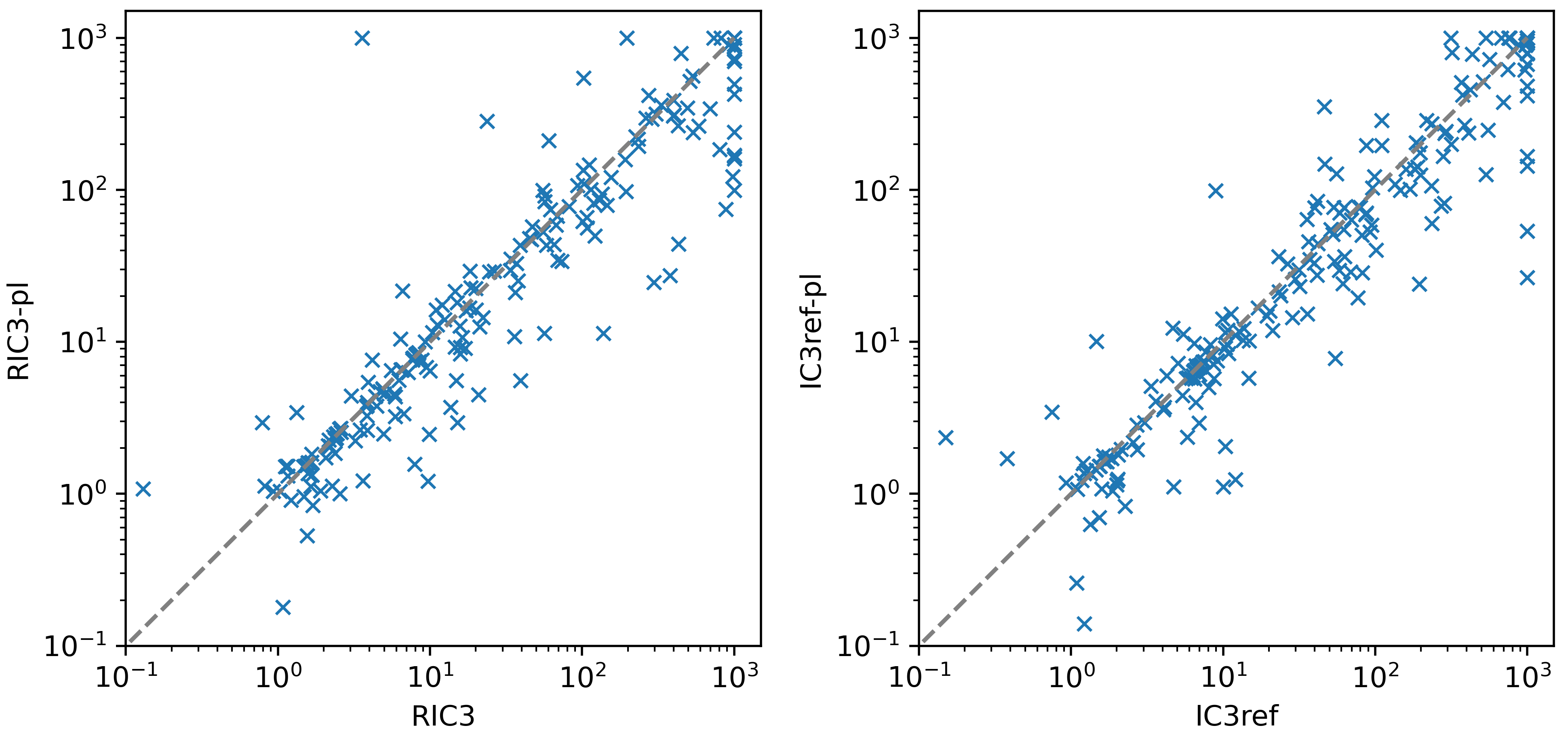}
    \caption{Scatters of RIC3 and IC3ref  with and without the proposed optimization. Points below the diagonal indicate better performance with the proposed optimization active.}
    \label{fig:Scatters}
\end{figure}

\subsection{Analysis}
\begin{table}
\caption{Average Success Rates}
\label{tab:SuccessRate}
\begin{tabular}{c c c c}
\hline
Configuration & Avg $SR_{lp}$ & Avg $SR_{fp}$ & Avg $SR_{adv}$\\
\hline
RIC3-pl & 38.61\% & 40.67\% & 24.03\% \\
IC3ref-pl & 31.5\% & 37.81\% & 19.46\% \\
\hline
\end{tabular}
\end{table}

\begin{figure}
    \centering
    \begin{subfigure}{0.45\textwidth}
        \centering
        \includegraphics[width=245pt]{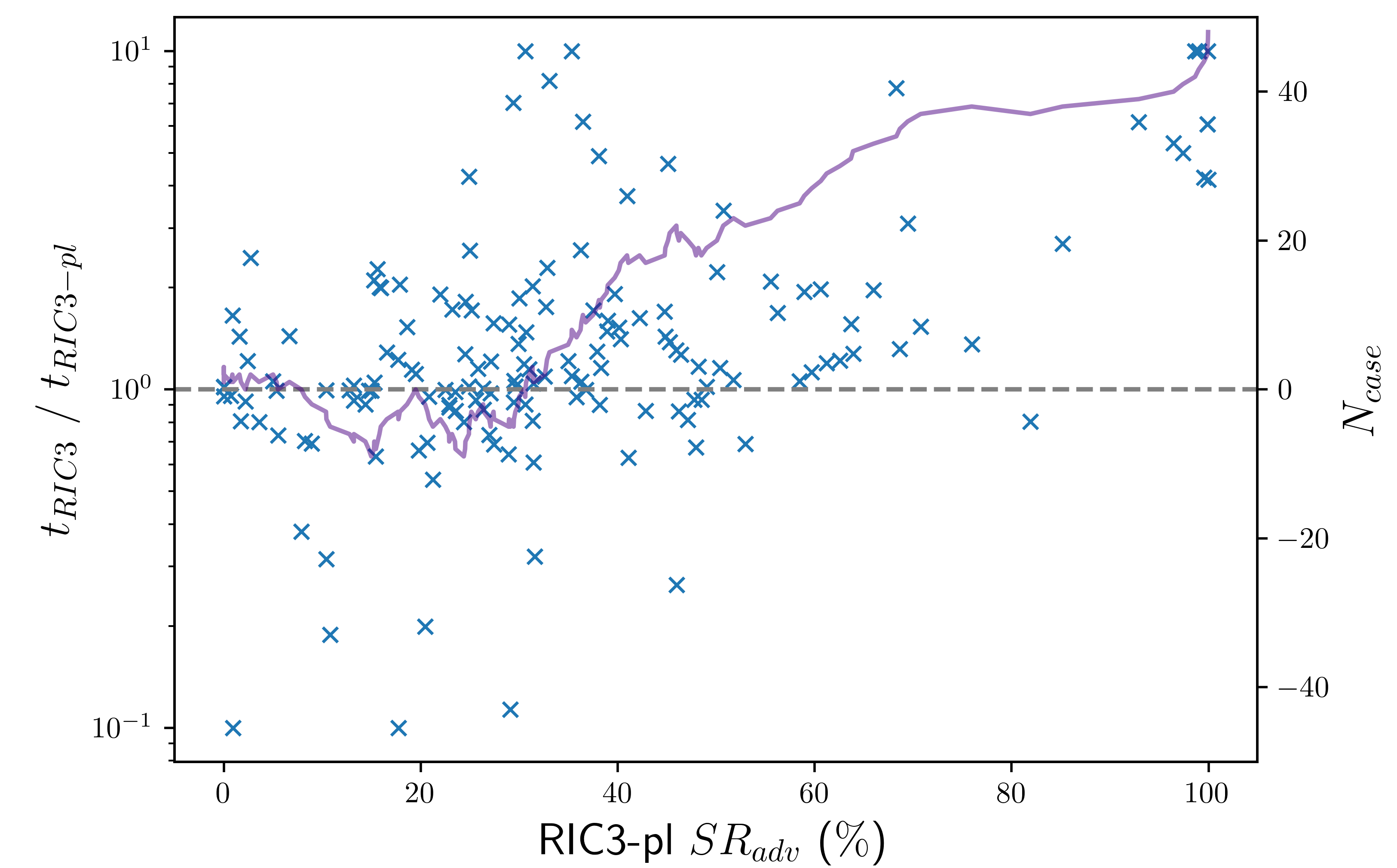}
    \end{subfigure}
    \vskip 0cm
    \begin{subfigure}{0.45\textwidth}
        \centering
        \includegraphics[width=245pt]{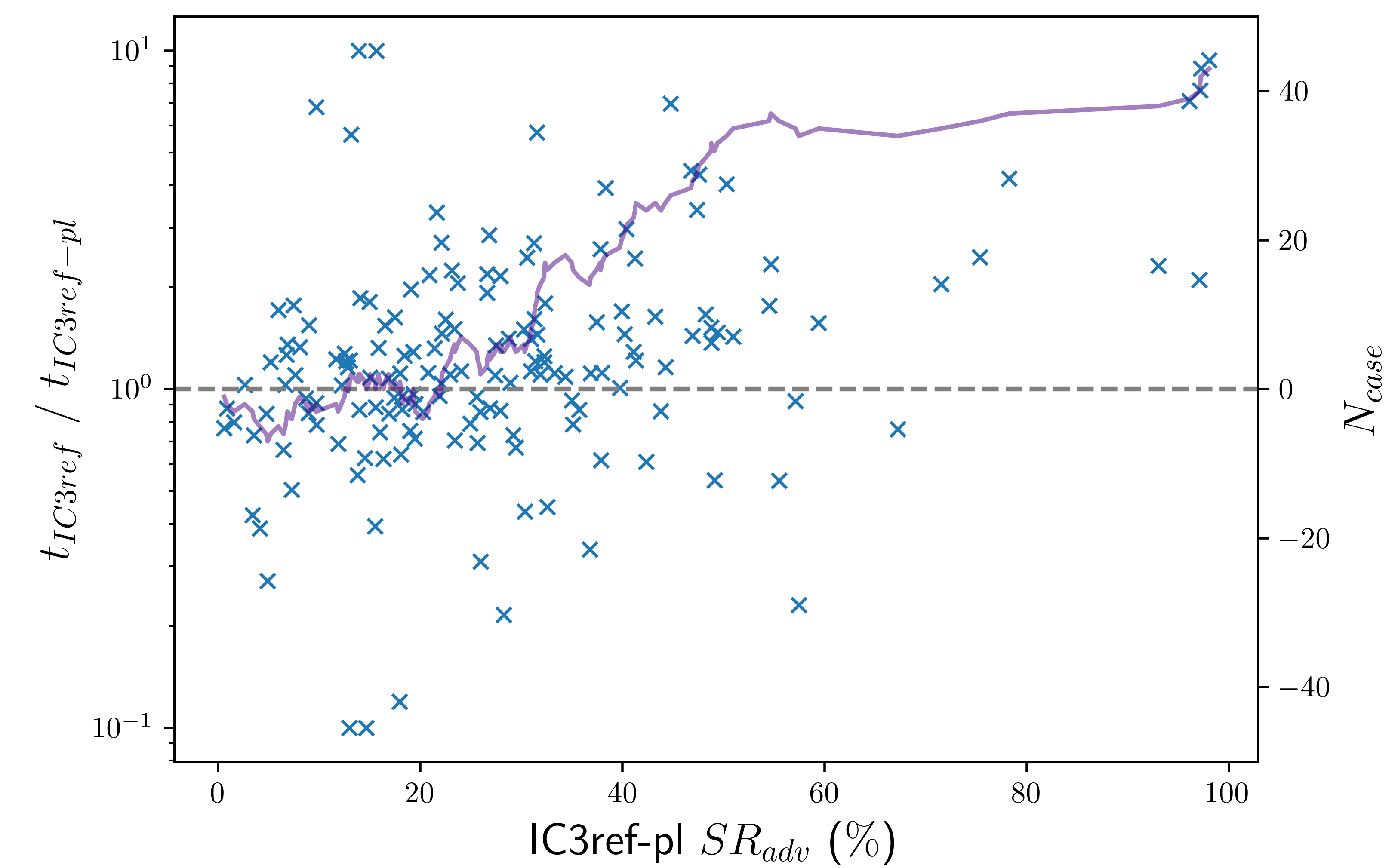}
    \end{subfigure}

    \caption{Comparing the ratio of runtime of different implementations without and with our proposed optimization (left y-axis) with the success rate of avoiding dropped variables $SR_{adv}$ during generalization (x-axis). The right y-axis represents the cumulative number of cases with improved performance as the prediction accuracy increases. Cases with runtime both timeout or both less than 1s are ignored.}
    \label{fig:SuccessRate}
\end{figure}

To better demonstrate the effectiveness of our proposed method, we introduced several success rates: the lemma prediction success rate ($SR_{lp} = N_{sp} / N_{p}$), the success rate of finding failed pushed parent lemmas ($SR_{fp} = N_{fp} / N_g$), and the success rate of avoiding dropping variables ($SR_{adv} = N_{sp} / N_g$). Here, $N_{sp}$ represents the number of successful lemma predictions, $N_p$ represents the total number of lemma predictions (the number of SAT queries in prediction), $N_{fp}$ represents the number of successfully finding failed pushed parent lemmas, and $N_g$ represents the total number of generalizations. Table \ref{tab:SuccessRate} presents the three kind of average success rates of different implementations with our proposed optimization. It can be observed that our method can achieve a considerable success rate if the failed pushed lemma is found. Figure \ref{fig:SuccessRate} depicts the correlation between $SR_{adv}$ and the checking time. It can be observed that higher prediction success rates result in a better performance in the proportion of cases. This observation supports the conjecture that higher accuracy in predictions can yield better performance.

\section{Related Work}
\label{Sec:RelatedWork}
After the introduction of IC3 \cite{IC3}, numerous works have been proposed to improve the efficiency. \cite{FBPDR} interleaves a forward and a backward execution of IC3 and strengthens one frame sequence by leveraging the other. \cite{IC3Progress} introduces under-approximation to improve the performance of bug-finding.

When a clause fails to be relatively inductive, a common approach is to identify the counterexample of this failure and attempting to render it ineffective. \cite{BetterGeneralization} attempts to block counterexample to generalization (CTG) to make the variable dropped clause inductive. \cite{PushingToTheTop} aggressively pushes lemmas to the top by adding may-proof-obligation (blocking counterexamples of push failure). The key distinction between our proposed method and these approaches lies in the fact that while they all focus on blocking the predecessor state (state $s$ in Figure \ref{fig:Fi1Fi}) of a counterexample, we utilize the successor state (state $t$) to predict a possible lemma.

Another related work is \cite{CAV23}, which also attempts to find the parent lemma and doesn't dropping literals which existed in the parent lemma during generalization. Our approach differs significantly as we attempt to avoid dropping variables by refining failed pushed parent lemma as a prediction.

\section{Conclusion and Future Work}
\label{Sec:Conclusion}
In this paper, we proposed a novel method for predicting minimal lemmas before dropping variables by utilizing the counterexample to propagation (CTP). By making successful predictions, we can effectively circumvent the need to drop variables during generalization, which is a significant overhead in the IC3 algorithm. The experimental results demonstrate a commendable success rate in lemma prediction, consequently leading to a significant performance improvement. In the future, we plan to improve the prediction rate of our proposed method in order to avoiding dropping variables as much as possible.

\begin{acks}
This work was supported by the Basic Research Projects from the Institute of Software, Chinese Academy of Sciences (Grant No. ISCAS-JCZD-202307) and the National Natural Science Foundation of China (Grant No. 62372438).
\end{acks}

\bibliographystyle{ACM-Reference-Format}
\bibliography{pred}

\end{document}